\documentclass[aps,pra, nofootinbib, groupedaddress]{revtex4}

\usepackage{mathpazo}
\usepackage{amsmath}
\usepackage{amsfonts,amssymb,amsthm, bbm,braket}
\usepackage{graphicx}   
\usepackage{subfigure}  
\usepackage{amsbsy} 
\usepackage[bold]{hhtensor} 
\usepackage{dsfont}

\newcommand\herm{\mathbb H(\mathcal H)}

\newcommand\Tr{\mathrm{Tr}}

\newcommand\ip[2]{\langle#1 | #2\rangle}
\newcommand\op[2]{|#1\rangle\!\langle#2|}

\newcommand\abs[1]{\left|#1\right|}
\newcommand\av[1]{\left\langle #1 \right\rangle}
\newcommand\norm[1]{\|#1\|}
\newcommand\conj[1]{\overline{#1}}

\newtheorem{definition}{Definition}

\newtheorem{theorem}{Theorem}


\usepackage{hyperref} 

\hypersetup{
  colorlinks   = true, 
  urlcolor     = blue, 
  linkcolor    = blue, 
  citecolor   =  red 
}

\begin{document}

\title{Quantum Bochner's theorem for phase spaces built on projective representations}
\author{Ninnat Dangniam}
\affiliation{
Center for Quantum Information and Control,
University of New Mexico,
Albuquerque, New Mexico, 87131-0001}

\author{Christopher Ferrie}
\affiliation{
Center for Quantum Information and Control,
University of New Mexico,
Albuquerque, New Mexico, 87131-0001}
\affiliation{Centre for Engineered Quantum Systems, School of Physics, The University of Sydney, Sydney, NSW, Australia}

\begin{abstract}Bochner's theorem gives the necessary and sufficient conditions on a  function such that its Fourier transform corresponds to a true probability density function.  In the Wigner phase space picture, \emph{quantum} Bochner's theorem gives the necessary and sufficient conditions on a function such that it is a quantum characteristic function of a valid (and possibly mixed) quantum state and such that its Fourier transform is a true probability density.  We extend this theorem to discrete phase space representations which possess enough symmetry. More precisely, we show that discrete phase space representations that are built on projective unitary representations of abelian groups, with a slight restriction on admissible 2-cocycles, enable a quantum Bochner's theorem.
\end{abstract}

\date{\today}

\maketitle


\tableofcontents

\section{Introduction\label{S:Introduction}}

The differences between classical and quantum mechanics have distinct manifestations depending on how one looks at the problem.  A common approach which attempts to ground classical and quantum theory in the same picture is phase space. Phase space is a natural concept in classical theory since
it is equivalent to the state space. The idea of formulating quantum theory in
phase space dates back to 1932 with Wigner \cite{Wigner1932Quantum}. The now termed ``Wigner function'' is a quasi-probability
distribution on a classical phase space which represents a quantum state. The term
``quasi-probability'' refers to the fact that the function is not a true probability density
as it takes on negative values for some quantum states.  

In the formulation of quantum theory there are many conceptual barriers
to overcome in gaining an intuition for the nature of quantum systems.
The phase space reformulation, however, allows for visualization and other analytical
techniques that are already well understood and applied to classical probability
distributions.  In this picture, much of the conceptual problems are replaced by one: negative
probability.  As such, negativity features prominently in many studies of the differences and transitions between quantum and classical mechanics.  For example, discussions of negativity of the Wigner function have appeared in the context of decoherence \cite{Paz_93}, chaos \cite{habib_98}, nonlocality \cite{kalev_2009}, simulatability \cite{Veitch2012Negative, Veitch2013Efficient,Mari2012Positive} and many more \cite{ferrie_2011}.

Hudson's theorem \cite{Hudson1974When} was the first to characterize the positive Wigner functions\footnote{Hudson's results was generalized to many-particles in reference \cite
  {Soto1983When}. For a recent simple proof, see reference \cite
  {Toft2006Hudsons} and also references therein for other proofs and
  generalizations.
}.  Surprisingly, he found that the Wigner function of a \emph{pure} quantum state is positive if and only if it is a Gaussian distribution in phase space---in quantum optics terminology, a coherent or squeezed state.  More recently, Gross \cite{Gross2006Hudsons} has found an analog of Hudson's theorem for odd dimensional Wigner functions.  He found that the discrete Wigner function of a pure state is positive if and only if it is a stabilizer state, which is a discrete analog of a Gaussian state.  For the positivity of the Wigner function, the question of mixed states was studied in reference \cite{Srinivas1975Some} and later independently in \cite{Brocker1995Mixed}.  Both references independently found that a theorem in classical probability attributed to Bochner \cite{Bochner1933Monotone} and generalization thereof can be used to characterize both the valid Wigner functions and the subset of positive ones.  Surprising at the time was that positive Wigner functions were not limited to the convex hull of Gaussian states.  This fact allows phase space representations to be generalizations of classically efficient simulation schemes \cite{Veitch2012Negative, Veitch2013Efficient,Mari2012Positive}.

The problem is that the Wigner function is but one of an infinitude of phase space representations of quantum theory, many of which have been utilized to understand some aspects of quantum theory from both foundational and practical perspectives \cite{ferrie_2011}.  It behooves us then to generalize the theorems which identified the positive Wigner functions to other phase space representations. One might hope to then generalize the techniques mentioned above involving positive Wigner function to other phase space representations.

But, how much can the quantum Bochner's theorem be generalized? We could
call any theorem that simultaneously determines the positivity of
a quasi-probability distribution and the underlying operator a generalized
quantum Bochner's theorem, but that would be somewhat disingenuous since it misses the beautiful simplicity which Bochner and others identified.  But to demand such symmetry requires a restriction to a set of phase space representation which possesses a kind of generalized Fourier transform.  This will seem natural once the original quantum Bochner's theorem is stated below since it relies on the usual Fourier transform.  To be precise, we show that discrete phase space representations that are built on projective unitary representations of abelian groups, which also have an additional property we call a \emph{projective Fourier frame}, naturally enable a theorem characterizing the valid and positive quasi-probability distributions.

The outline of the remainder of the paper is as follows.  In Section \ref{S:Bochner's theorem} we review quantum Bochner's theorem for both the continuous and discrete Wigner functions.  In Section \ref{S:Frame formalism} we define the frame formalism which generalizes the Wigner function to arbitrary quasi-probability representations of quantum states.  Section \ref{S:main} contains the main results: a generalization of quantum Bochner's theorem, a sufficient condition for discrete quasi-probability representations to admit such a generalization, and a characterization of these quasi-probability representations.  We conclude with Section \ref{S:end}.

\section{Bochner's theorem for Wigner functions\label{S:Bochner's theorem}}

Let us first recall Bochner's theorem in classical probability \cite{Bochner1933Monotone}.    We require a few standard definitions.  Recall that a probability density on $\mathbb R^2$, for example, is a positive function which is normalized to unity:
\begin{align}
\mu(p,q) &\geq 0,\\
\iint_{\mathbb R^2} \mu(p,q) dq dp &= 1.
\end{align}
A related concept is a \emph{positive definite} function, which is one for which the following is satisfied for arbitrary pairs of real numbers $(\zeta_1,\eta_1),\ldots,(\zeta_N,\eta_N)$ and complex numbers $a_1,\ldots,a_N$:
\begin{equation}\label{pos def function}
\sum_{k,k'=1}^N \conj a_{k}a_{k'} \phi(\zeta_{k'}-\zeta_k,\eta_{k'}-\eta_k)\geq0,
\end{equation}
for all positive integers $N$.

A central objection in classical theory and this work is the Fourier transform of $\mu$.  
\begin{definition}\label{def:char}
The Fourier transform of $\mu$,
\begin{equation}
\phi(\zeta,\eta) = \iint_{\mathbb R^2} \mu(p,q) e^{-i(\zeta q +\eta p)} dqdp,
\end{equation}
is called the \emph{characteristic function}.  The same definition will hold for the Fourier transform on groups other than $\mathbb R^2$.
\end{definition}
Given a function $\phi$, what are the necessary and sufficient conditions for its inverse Fourier transform to be a valid probability density?   The answer is given by Bochner's theorem:
\begin{theorem}\label{Bochner's theorem}
The function $\phi$ is a characteristic function---that is, the Fourier transform of some probability density, if and only if the following are satisfied:
\begin{enumerate}
\item $\phi(0,0)=1$,
\item $\phi$ is continuous,
\item $\phi$ is positive definite.
\end{enumerate}
\end{theorem}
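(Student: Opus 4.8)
\emph{Proof proposal.} I would establish the two implications separately; the reverse direction (items~1--3 $\Rightarrow$ $\phi$ a characteristic function) carries essentially all of the work. For necessity, suppose $\phi$ is the Fourier transform of a probability density $\mu$. Then $\phi(0,0)=\iint_{\mathbb R^2}\mu\,dq\,dp=1$ by normalization (item~1), and $\phi$ is uniformly continuous by dominated convergence, since the integrand defining it has modulus $\le\mu\in L^1$ (item~2). For item~3, inserting the integral representation of $\phi$ into the double sum and regrouping gives the familiar ``complete-the-square'' identity
\begin{equation*}
\sum_{k,k'=1}^N \conj a_k a_{k'}\,\phi(\zeta_{k'}-\zeta_k,\eta_{k'}-\eta_k)=\iint_{\mathbb R^2}\mu(p,q)\,\Bigl|\sum_{k=1}^N a_k\,e^{-i(\zeta_k q+\eta_k p)}\Bigr|^2 dq\,dp\ge 0 ,
\end{equation*}
since $\mu\ge 0$.

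For sufficiency I would first reduce to the case of an integrable $\phi$. Taking $N=2$ with one node at the origin in the positive-definiteness condition forces the associated $2\times2$ matrix to be positive semidefinite, hence $\phi(-\zeta,-\eta)=\conj{\phi(\zeta,\eta)}$ and $\abs{\phi(\zeta,\eta)}\le\phi(0,0)=1$; in particular $\phi$ is bounded. The Gaussian $g_\epsilon(\zeta,\eta)=e^{-\epsilon(\zeta^2+\eta^2)}$ is itself the Fourier transform of a Gaussian probability density, hence positive definite by the necessity direction just proved, and the entrywise product of positive-semidefinite matrices is positive semidefinite (Schur product theorem); therefore $\phi_\epsilon:=\phi\cdot g_\epsilon$ is again continuous and positive definite, satisfies $\phi_\epsilon(0,0)=1$, and now lies in $L^1(\mathbb R^2)$, being a bounded function times a Gaussian.

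The technical heart is the lemma that a continuous, positive-definite, integrable function $\psi$ has pointwise-nonnegative inverse Fourier transform $\check\psi$ (which is continuous and vanishes at infinity by Riemann--Lebesgue). To prove it, one approximates the integral $\iint f(x)\conj{f(y)}\,\psi(y-x)\,dx\,dy$ by Riemann sums, truncating for the tails: each sum is of the positive-definite form above with coefficients $a_k=\conj{f(x_k)}$, so it is nonnegative, and hence so is the integral, for every Schwartz function $f$. Changing variables $u=y-x$ identifies the inner integral as the autocorrelation of $f$, whose Fourier transform is $\abs{\hat f}^2$; Parseval's formula then rewrites the quantity, up to a positive constant, as $\int\abs{\hat f(\xi)}^2\,\check\psi(\xi)\,d\xi\ge 0$, where $\check\psi$ is real because $\psi(-u)=\conj{\psi(u)}$. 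Choosing $f(x)=e^{i\langle\xi_0,x\rangle}$ times a narrow Gaussian makes $\abs{\hat f}^2$ an approximate identity centered at an arbitrary point $\xi_0$, and continuity of $\check\psi$ yields $\check\psi(\xi_0)\ge 0$. Applying this lemma, $\mu_\epsilon:=\check\phi_\epsilon\ge 0$; testing against a Gaussian and invoking monotone convergence gives $\iint_{\mathbb R^2}\mu_\epsilon\,dq\,dp=\phi_\epsilon(0,0)=1$, so $\mu_\epsilon$ is a probability density whose characteristic function is $\phi_\epsilon$. Letting $\epsilon\to0$ we have $\phi_\epsilon\to\phi$ uniformly on compact sets and $\phi$ continuous at the origin with value $1$, so by L\'evy's continuity theorem the family $\{\mu_\epsilon\}$ is tight and converges weakly to a probability measure $\mu$ with $\widehat\mu=\phi$.

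I expect the lemma above to be the only real obstacle: its nontrivial content is promoting the discrete positive-definiteness inequality to its integral form for Schwartz test functions and then extracting \emph{pointwise} nonnegativity of $\check\psi$ by the localization argument; everything else is routine Fourier analysis. I would also flag that, strictly, the object produced in the limit is a probability \emph{measure}, and it is a density in the literal sense of the statement only under an extra integrability assumption on $\phi$ (for instance $\phi\equiv1$ yields a point mass) --- a caveat worth recording when this theorem is used later.
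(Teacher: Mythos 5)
The paper offers no proof of this statement: Theorem~\ref{Bochner's theorem} is recalled as a classical result and attributed to Bochner \cite{Bochner1933Monotone}, so there is no in-paper argument to compare yours against. Your proposal is a correct rendition of the standard proof: necessity via the complete-the-square identity, and sufficiency via Gaussian damping (with the Schur product theorem preserving positive definiteness), the key lemma that a continuous, integrable, positive-definite function has a pointwise-nonnegative inverse Fourier transform, and L\'evy's continuity theorem to remove the damping. The only nit is in the localization step of the lemma: to make $\abs{\hat f}^2$ an approximate identity concentrated at $\xi_0$ you want $e^{i\langle\xi_0,x\rangle}$ times a \emph{wide} (suitably $L^2$-normalized) Gaussian, not a narrow one; this is cosmetic. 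Your closing caveat is apt and is the one place you genuinely sharpen the paper: with ``probability density'' read literally the statement is false ($\phi\equiv1$ satisfies all three conditions but inverts to a point mass), and the correct conclusion is a probability \emph{measure}. The imprecision is harmless for the paper's purposes, since the versions it actually uses and proves (Theorems~\ref{gen_bochner} and~\ref{quant_class_boch}) live on finite groups, where your entire analytic apparatus --- damping, the integrability lemma, the weak-convergence limit --- collapses to the finite complete-the-square computation the paper carries out explicitly in the proof of Theorem~\ref{gen_bochner}.
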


In this section we will review the known quantum generalizations of Bochner's theorem.

\subsection{Continuous Wigner Function}

The position operator $Q$ and momentum operator $P$ are the central objects in the formulation of infinite
dimensional quantum theory.  The operators satisfy the canonical
commutation relations $[  Q,   P]=i$.  We naturally would like a joint probability distribution $\mu_\rho(p,q)$ of the values of $Q$ and $P$.  From the postulates of quantum mechanics we have a rule for calculating expectation values.  Noting that characteristic function in Definition \ref{def:char} is the expectation value of $e^{-i(\zeta q +\eta p)}$, we choose a quantization of $p$ and $q$ to obtain
\begin{equation}\label{Wigner characteristic function}
\phi_\rho(\zeta,\eta):=\av{e^{ -i(\zeta Q+\eta P)}}=\Tr(e^{-i (\zeta Q+\eta P)}\rho).
\end{equation}
Since the characteristic function is the Fourier transform of the joint probability distribution, we invert to obtain
\begin{equation}\label{Wigner Fourier invert of characteristic}
 \mu_\rho(p,q)=\frac{1}{(2\pi)^2}\iint_{\mathbb R^2} \Tr(e^{-i(\zeta Q+\eta P)}\rho) e^{ i (\zeta q +\eta p)}d\zeta d\eta,
\end{equation}
which is the now famous Wigner function of $\rho$ \cite{Wigner1932Quantum}.   The Wigner function is both positive and negative for some quantum states.  However, it otherwise behaves as a classical probability density on the classical phase space.  For these reasons, the Wigner function and others like it came to be called \emph{quasi-probability} functions.  The Wigner functions which are positive are easily characterized by Bochner's theorem.  However, not every probability density corresponds to a valid quantum states.  Thus, we need to generalize Bochner's theorem.

The notion of positive definite function can be generalized to the following: a function $\phi$ is called $\gamma$-positive definite if
\begin{equation}\label{pos zeta def function}
\sum_{k,k'}^N \conj a_{k}a_{k'} \phi(\zeta_{k'}-\zeta_k,\eta_{k'}-\eta_k)e^{i\gamma (\zeta_k\eta_{k'}-\zeta_{k'}\eta_k)/2}\geq0.
\end{equation}

If the definition of the characteristic function of a density operator is extended to a generic operator, denoted also by $\rho$, which is not necessarily a density operator to begin with, then a theorem generalizing Bochner's theorem can be stated as follow, wherein the straightforward normalization condition is omitted: \cite{Srinivas1975Some,Brocker1995Mixed}:
\begin{theorem}\label{wigner_bochner}
Let $\phi_\rho$ be the characteristic function of $\rho$.  Then,
\begin{enumerate}
\item $\rho$ is a density operator if and only if $\phi_\rho$ is 1-positive definite.
\item $\rho$ is a density operator with positive Wigner function representation if and only if $\phi_\rho$ is simultaneously 1-positive definite and 0-positive definite.
\end{enumerate}
\end{theorem}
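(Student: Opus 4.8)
\emph{Setup and the easy direction of Part 1.} The plan is to work with the Weyl (displacement) operators $D(\zeta,\eta):=e^{-i(\zeta Q+\eta P)}$, so that by definition $\phi_\rho(\zeta,\eta)=\Tr\!\big(D(\zeta,\eta)\rho\big)$, and to use the two algebraic facts $D(u)^\dagger=D(-u)$ and $D(u)D(v)=e^{-\frac{i}{2}\sigma(u,v)}D(u+v)$, where $u=(\zeta,\eta)$, $v=(\zeta',\eta')$ and $\sigma(u,v)=\zeta\eta'-\zeta'\eta$; both follow from $[Q,P]=i$ via the Baker--Campbell--Hausdorff formula. For finite data $\{(u_k,a_k)\}$ set $M:=\sum_k a_kD(u_k)$. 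Expanding $M^\dagger M$ with these facts gives $M^\dagger M=\sum_{k,k'}\conj{a_k}a_{k'}\,e^{\frac{i}{2}\sigma(u_k,u_{k'})}\,D(u_{k'}-u_k)$, hence
\begin{equation*}
\Tr(M^\dagger M\rho)=\sum_{k,k'}\conj{a_k}a_{k'}\,\phi_\rho(u_{k'}-u_k)\,e^{i(\zeta_k\eta_{k'}-\zeta_{k'}\eta_k)/2},
\end{equation*}
which is exactly the left-hand side of the $1$-positive-definiteness condition \eqref{pos zeta def function}. If $\rho\ge0$ then this is nonnegative (and finite, since $\rho$ is trace class) because the trace of a product of two positive operators is nonnegative; so $\phi_\rho$ is $1$-positive definite, and $\phi_\rho(0,0)=\Tr\rho=1$ supplies the normalization suppressed in the statement.

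\emph{The hard direction of Part 1.} Suppose $\phi_\rho$ is $1$-positive definite. Taking Riemann sums in the displayed inequality and passing to the limit upgrades the finite condition to the integral inequality $\Tr(M_g^\dagger M_g\rho)\ge0$ for all sufficiently regular test functions $g$, where $M_g:=\int g(u)\,D(u)\,du$; here the identity $e^{\frac{i}{2}\sigma(u,v)}D(v-u)=D(u)^\dagger D(v)$ is what lets one recognize the double integral as $\Tr(M_g^\dagger M_g\rho)$. Now one invokes completeness of the Weyl system: for $\psi$ ranging over a dense domain of regular vectors (Schwartz vectors, or finite linear combinations of coherent states), the rank-one projector $\op{\psi}{\psi}$ is itself a Weyl transform $M_{h_\psi}$ with $h_\psi\in L^1$, its kernel being essentially the characteristic function $\langle\psi|D(\cdot)|\psi\rangle$, which is Schwartz. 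Since $\op{\psi}{\psi}$ is a normalized projector, $M_{h_\psi}^\dagger M_{h_\psi}=\op{\psi}{\psi}$, whence $\langle\psi|\rho|\psi\rangle=\Tr(M_{h_\psi}^\dagger M_{h_\psi}\rho)\ge0$. As the $\psi$ run over a dense subspace, $\rho\ge0$, and with the normalization this makes $\rho$ a density operator.

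\emph{Part 2.} By Part 1 we may assume $\rho$ is a density operator and need only show that $\mu_\rho\ge0$ if and only if $\phi_\rho$ is $0$-positive definite. But $0$-positive definiteness is precisely ordinary positive definiteness \eqref{pos def function}, and by \eqref{Wigner Fourier invert of characteristic} $\phi_\rho$ is the Fourier transform of $\mu_\rho$; moreover $\phi_\rho(0,0)=1$ and $\phi_\rho$ is continuous, the latter because $(\zeta,\eta)\mapsto\Tr(D(\zeta,\eta)\rho)$ is continuous by strong continuity of the Weyl operators together with $\rho$ being trace class. Classical Bochner's theorem (Theorem \ref{Bochner's theorem}) then says exactly that such a $\phi_\rho$ is the Fourier transform of a probability density, i.e.\ that $\mu_\rho$ is nonnegative (normalization being automatic). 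Combined with Part 1 this yields the stated equivalence.

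\emph{Main obstacle.} Everything apart from the hard direction of Part 1 is bookkeeping with the Weyl composition law plus a citation of the classical theorem. The real work is the completeness argument: justifying the passage to the integral inequality and, above all, the fact that projectors onto regular vectors are Weyl transforms of $L^1$ kernels, so that a hypothesis which a priori only constrains finite twisted sums of values of $\phi_\rho$ ends up controlling $\langle\psi|\rho|\psi\rangle$ on a dense subspace. One must also fix at the outset the operator class in which $\rho$ lives (at least trace class) so that $\phi_\rho$ is pointwise defined; this is the regularity fine print glossed over in the statement.
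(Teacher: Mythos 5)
Your argument is sound, but note that the paper does not actually prove this theorem: Theorem \ref{wigner_bochner} is stated as a known result and attributed to Srinivas--Wolf and Br\"ocker--Werner, so there is no in-text proof to compare against. What you have written is essentially the standard proof from those references, and its algebraic skeleton (expand $M=\sum_k a_k D(u_k)$, use $D(u)^\dagger D(v)=e^{\frac{i}{2}\sigma(u,v)}D(v-u)$ to identify $\Tr(M^\dagger M\rho)$ with the $1$-positive-definiteness quadratic form, then invoke classical Bochner for the $0$-positive-definiteness half) is exactly the computation the paper carries out explicitly in the discrete setting for Theorems \ref{gross_bochner} and \ref{main_thm}. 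You correctly isolate the one place where the continuous case is genuinely harder than the discrete one: in finite dimensions the twisted Weyl operators form a basis, so ``$\Tr(\rho A^\dagger A)\ge 0$ for all finite combinations'' immediately gives $\rho\ge 0$, whereas on $L^2(\mathbb R)$ one must pass from finite sums to $L^1$ superpositions and use that rank-one projectors onto Schwartz (or coherent-state) vectors are Weyl transforms of integrable kernels. That completeness step is the real content of the hard direction of Part 1, and while you only sketch it (the Riemann-sum limit needs the continuity of $\phi_\rho$ coming from $\rho$ being trace class, and the identity $M_{h_\psi}^\dagger M_{h_\psi}=\op{\psi}{\psi}$ needs the twisted-convolution calculus), the tools you name are the correct ones and the outline would go through; your Part 2 reduction to the classical Theorem \ref{Bochner's theorem} is likewise correct, modulo the routine identification of the probability measure produced by Bochner with the $L^2$ function $\mu_\rho$.
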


This is the \emph{quantum Bochner's theorem} which we would like to generalize to arbitrary quasi-probability representations.  Note that the Fourier transform features prominently---after all, the theorem is stated not for the quasi-probability density but for its Fourier transform.  One finite dimensional generalization of the Wigner formalism is built on the \emph{discrete} Fourier transform.  As we might expect then, the theorem looks completely analogous for that case.

\subsection{Discrete Wigner Function}

Consider a complex Hilbert space $\mathcal H$ of odd dimension $d$ and denote its standard basis $\{\ket{\psi_k}\}$.  We define the conjugate basis as the discrete Fourier transform of $\{\ket{\psi_k}\}$.  That is,
\begin{equation}
 \ket{\varphi_m}=\frac{1}{\sqrt{d}}\sum_{k=0}^{d-1}e^{ -i 2\pi km/d}\ket{\psi_k}.\label{mombasis}
\end{equation}
Recall that for the continuous case, to arrive at the Wigner function, we Fourier inverted the characteristic function $ \av{e^{ -i(\zeta Q+\eta P)}}$.  First note that due to the Campbell-Baker-Hausdorff formula, this is equivalent to inverting $ \av{e^{-i\eta P}e^{-i\zeta Q}e^{-i \zeta\eta/2}} $.  This formula is not valid when $  Q$ and $  P$ are bounded operators.  We can continue upon noting that  $e^{-i P}$ and $e^{-i Q}$ are the \emph{discrete} displacement operators in the \emph{continuous} functions they act on.  They are often referred to as the \emph{shift} and \emph{boost} operators.

We define the \emph{generalized Pauli matrices} $  Z$ and $  X$ as
\begin{equation}
\begin{aligned}
  Z\ket{\psi_k}=\omega^k\ket{\psi_k},\;\;&  Z\ket{\varphi_m}=\ket{\varphi_{m+1}},\\
  X\ket{\varphi_m}=\omega^{-m}\ket{\varphi_m},\;\;&  X\ket{\psi_k}=\ket{\psi_{k+1}}.
\end{aligned}\label{genPaulis}
\end{equation}
where $\omega=e^{-i2\pi/d}$.  Notice that $  Z$ and $  X$ generate discrete displacements in our discrete functions.  That is, $X$ plays the role of $e^{-i P}$ and $Z$ plays the role of $e^{-i Q}$.  So, instead of Fourier inverting $ \av{e^{-ij P}e^{-il Q}\omega^{jl/2}} $, we discretely Fourier invert what we define as the characteristic function of a density matrix acting on $\mathcal H$:
\begin{equation}
\phi_\rho(j,l):=\av{  X^{j}  Z^{l} \omega^{jl/2}}.\label{disccharZX}
\end{equation}
Since there is no multiplicative inverse of 2 in even dimensions, we can see why this prescription is only valid for odd $d$.  The discrete Fourier inverse is the \emph{discrete Wigner function}
\begin{equation}
\mu_\rho(q,p):=\frac{1}{d}\sum_{j,l=0}^{d-1} \omega^{-(jq+lp)}\Tr(\rho  X^{j}  Z^{l} \omega^{jl/2}).
\end{equation}
If $\rho=\op\psi\psi$ is a pure state and $\ip{\psi_k}{\psi}=\alpha_k$ then 
\begin{equation}
\mu_\rho(q,p)=\frac{1}{d}\sum_{s=0}^{d-1} \omega^{-ps}\alpha_{q-\frac{s}{2}}\alpha_{q+\frac{s}{2}}^\ast\label{GrossDWF}
\end{equation}
is (unitarily equivalent to) the discrete Wigner function defined by Gross \cite{Gross2006Hudsons}.

A Hermitian operator is a quantum state if it is positive semi-definite.  That is, $\rho$ is density matrix if and only if $\av{A^\dag A}=\Tr(\rho A^\dag A)\geq0$ for all Hermitian $A$. Since $\{X^{j}  Z^{l} \omega^{jl/2}\}$ constitutes an orthogonal basis, it can be used to expand $A$ in its \emph{Fourier representation}
\begin{align*}
\av{A^\dag A}&=\sum_{j,l=0}^{d-1}\sum_{j',l'=0}^{d-1} \conj a_{jl} a_{j'l'} \av{Z^{-l}X^{-j}X^{j'}Z^{l'}\omega^{(j'l'-jl)/2}}\\
&=\sum_{j,l=0}^{d-1}\sum_{j',l'=0}^{d-1} \conj a_{jl} a_{j'l'} \phi_\rho(j'-j,l'-l)\omega^{(jl'-j'l)/2}.
\end{align*}
Let $\zeta=(j,l)$.  Then defining the matrices
\begin{align*}
M_{\zeta\zeta'}^C&=\phi_{\rho}(j'-j,l'-l),\\
M_{\zeta\zeta'}^Q&=\phi_{\rho}(j'-j,l'-l)\omega^{(jl'-j'l)/2},\\
\end{align*}
we have the follow discrete analog of quantum Bochner's theorem:
\begin{theorem}\label{gross_bochner}
Let $\phi_\rho$ be the characteristic function of $\rho$.  Then,
\begin{enumerate}
\item $\rho$ is a density operator if and only if $M^Q\geq0$.
\item $\rho$ is a density operator with positive discrete Wigner function representation if and only if both $M^C\geq0$ and $M^Q\geq0$.
\end{enumerate}
\end{theorem}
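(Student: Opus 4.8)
The plan is to obtain both parts of the theorem from the displayed expansion of $\av{A^\dag A}$ just above the statement, invoking only the standard criterion that a Hermitian operator $\rho$ is positive semidefinite iff $\Tr(\rho A^\dag A)\geq0$ for all $A$, together with the finite-group incarnation of classical Bochner's theorem (Theorem~\ref{Bochner's theorem}). For part 1, the key observation is that since $\{X^jZ^l\omega^{jl/2}\}_{j,l=0}^{d-1}$ is a basis for operators on $\mathcal H$, writing $A=\sum_{j,l}a_{jl}X^jZ^l\omega^{jl/2}$ sets up a bijection between operators $A$ and coefficient vectors $a\in\mathbb C^{d^2}$, and the displayed computation is precisely the identity $\Tr(\rho A^\dag A)=a^\dag M^Q a$. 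Hence, if $\rho$ is a density operator then $a^\dag M^Q a=\Tr(\rho A^\dag A)\geq0$ for every $a$, so $M^Q\geq0$; conversely, if $M^Q\geq0$ then $\Tr(\rho A^\dag A)\geq0$ for all $A$, and taking $A=\op\psi\chi$ gives $\langle\psi|\rho|\psi\rangle\geq0$ for all $\ket\psi$, which forces $\rho$ to be Hermitian and positive semidefinite. The normalization $\Tr\rho=\phi_\rho(0,0)=1$---equivalently, that $M^Q$ has unit diagonal---is the straightforward condition suppressed just as in Theorem~\ref{wigner_bochner}.

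For part 2, because part 1 already characterizes density operators as exactly those $\rho$ with $M^Q\geq0$, it remains only to show that positivity of the discrete Wigner function is equivalent to $M^C\geq0$. Here I would note that $M^C_{\zeta\zeta'}=\phi_\rho(\zeta'-\zeta)$ is group-circulant over $G=\mathbb Z_d\times\mathbb Z_d$, so it is diagonalized by the characters $\chi_{q,p}(j,l)=\omega^{-(jq+lp)}$: a one-line computation using $\chi_{q,p}(\eta+\zeta)=\chi_{q,p}(\eta)\chi_{q,p}(\zeta)$ yields $M^C\chi_{q,p}=\bigl(\sum_{j,l}\phi_\rho(j,l)\omega^{-(jq+lp)}\bigr)\chi_{q,p}=d\,\mu_\rho(q,p)\,\chi_{q,p}$. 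Since the $d^2$ characters form an orthogonal eigenbasis, $M^C\geq0$ iff all its eigenvalues $d\,\mu_\rho(q,p)$ are nonnegative, i.e.\ iff $\mu_\rho(q,p)\geq0$ for every $(q,p)$---this is exactly finite Bochner. Combining the two criteria, $M^Q\geq0$ and $M^C\geq0$ hold simultaneously iff $\rho$ is a density operator with $\mu_\rho\geq0$, which is exactly part 2 of the theorem.

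I do not expect a genuine obstacle---the argument is mostly assembly---but two points deserve care. The first is the Hermiticity step in the converse of part 1: one cannot assume $\rho=\rho^\dag$ at the outset, so either the rank-one specialization $A=\op\psi\chi$ above is used, or one argues that $M^Q\geq0$ forces $M^Q$ to be a Hermitian matrix, which via $\conj{\phi_\rho(j,l)}=\phi_{\rho^\dag}(-j,-l)$ (a short calculation from $ZX=\omega XZ$) and the injectivity of $\rho\mapsto\phi_\rho$ implies $\rho=\rho^\dag$. The second is bookkeeping of normalization constants in the diagonalization of $M^C$, so that the eigenvalues come out equal to $d\,\mu_\rho(q,p)$ and not some other multiple, making the equivalence with pointwise positivity exact.
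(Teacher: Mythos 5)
Your proposal is correct and follows essentially the same route as the paper: part 1 is exactly the displayed identity $\Tr(\rho A^\dag A)=a^\dag M^Q a$ combined with the standard positivity criterion, and part 2 reduces to the finite-group Bochner theorem (Theorem \ref{gen_bochner}), whose proof in the paper is the same character computation you phrase as diagonalizing the group-circulant $M^C$ with eigenvalues $d\,\mu_\rho(q,p)$. Your extra care about Hermiticity of $\rho$ in the converse of part 1 is a point the paper glosses over, but it is handled exactly as you suggest.
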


Notice the similarity to Theorem \ref{wigner_bochner}.  The theorems are nearly identical save for the particular definition of ``characteristic function''.  Next, we outline the generalizations of the Wigner functions to arbitrary quasi-probability representations  and then characterize the additional structure required to make a sensible definition of characteristic functions and hence, Fourier transform.

\section{Arbitrary quasi-probability representations and the frame formalism\label{S:Frame formalism}}
Let $\mathcal H$ be a complex Hilbert space of dimension $d$.  Let $\Lambda$ be some set of cardinality $d^2 \leq \abs{\Lambda} <\infty$.  This set represents some classical ontology (such as a phase space).  In \cite{Ferrie2008Frame}, a \emph{generalized quasi-probability representation} was defined to be any invertible affine map $\mu:\rho\to\mu_\rho$ which has
\begin{equation}
\mu_\rho(\lambda) \in \mathbb R \text{ and } \sum_{\lambda\in\Lambda} \mu_\rho(\lambda) = 1.
\end{equation}
Later, in \cite{Ferrie2009Framed}, this was generalized to include both states and measurements in a classical formalism which only has the constraint of positivity relaxed.  Such a formalism encompasses all known quasi-probability representations \cite{ferrie_2011}.

A \emph{frame} can be thought of as a generalization of an orthonormal basis \cite{Christensen2003Introduction}.  We will define frames not for the Hilbert space $\mathcal H$  but for set of Hermitian operators acting on $\mathcal H$, denoted $\herm$.  With the trace inner product $\ip{  A}{  B}:=\Tr(  A  B)$, $\herm$ forms a \emph{real} Hilbert space itself of dimension $d^2$.  

A frame for $\herm$ is a set of operators $\mathcal F:=\{ F(\lambda)\}\subset\herm$ which satisfies
\begin{equation}\label{def_discrete_frame}
a\norm{A}^2\leq\sum_{\lambda\in\Lambda} \Tr[ F(\lambda){A}]^2\leq b\norm{A}^2,
\end{equation}
for all $A\in\herm$ and some constants $a,b>0$.  This definition generalizes a defining condition for an orthogonal basis $\{ B_k\}_{k=1}^{d^2}$
\begin{equation}\label{def_basis}
\sum_{k=1}^{d^2}\Tr[{B_k}{A}]^2 = \norm{A}^2,
\end{equation}
for all $A\in\herm$.  The mapping $A\mapsto\Tr[ {F(\lambda)}{A}]$ is called a \emph{frame representation} of $\herm$.  It was shown in \cite{Ferrie2008Frame,Ferrie2009Framed} that each quasi-probability representation is associated with a frame.  That is, given any quasi-probability function $\mu$ on $\Lambda$, it must be the case that it can be obtained via the mapping
\begin{equation}\label{frame map}
\rho\mapsto\mu(\lambda)=\Tr(\rho F(\lambda)).
\end{equation}
The converse is also true: given any frame $F$, Equation (\ref{frame map}) defines a quasi-probability function $\mu$ on $\Lambda$.  Thus, we can equivalently discuss properties of the frame elements rather than those of the mapping itself or its image.  

For example, if a quasi-probability representation features only positive functions, then the frame elements must be positive operators.  If some $F(\lambda)$ is not a positive operator, $\mu(\lambda)$ must obtain negative values for some quantum state.  If we consider both states and measurements, then negativity must appear somewhere.  This necessity of negativity is folklore than has recently been formalized in various contexts\footnote{In this context, the \protect \emph {necessity of negativity} was first proven
  by Spekkens \cite {Spekkens2008Negativity}. Direct proofs using the theory of
  frames were given in references \cite {Ferrie2008Frame, Ferrie2009Framed} and
  generalized to infinite dimensional Hilbert spaces in reference \cite
  {Ferrie2010Necessity} (see also Stuple \cite{Stulpe}). The result, in finite dimensions, is also implied by earlier work on a
  related topic in reference \cite {Busch1993Classical}.}.

Since a frame representation is invertible, one can always invert the quasi-probability distribution and study the properties, and/or validity, of the density matrix.  This is somehow unappealing, however, as it would be more convenient to work directly within the phase space formulation one is considering.  This is precisely what Bochner's theorem does.  However, in order to obtain such simplification, some addition structure on the phase space representation must be assumed.  The examples we considered in the previous section already possess quite a bit of symmetry.  Next, we will generalize the necessary structure to a requirement on a generic quasi-probability representation.

\section{Projective Fourier frames and their quantum Bochner's theorem \label{S:main}}

This section presents the main results of the paper.  For our purposes, the key difference between the continuous and discrete Fourier transforms is that the former is done on the group $\mathbb{R}$ whereas the latter is done on a group $\mathbb{Z}_d$. In general, the Fourier transform is available on any locally compact group \cite{Folland1994} and the Bochner's theorem can be generalized to locally compact abelian groups, compact groups, and more \cite{Heyer1977}. Here we will restrict ourselves to finite groups. The phase space, hence the index of the frame in the quasi-probability representation, will be identified with a finite abelian group, $G$. We then define a ``Bochner representation'' to be a quasi-probability representation which admits a quantum Bochner's theorem, which will be essentially the same as the Theorems \ref{wigner_bochner} and \ref{gross_bochner} but stated in terms of the characteristic function defined via the generalized Fourier transform on $G$. This restriction to abelian groups allows us to find a family of frames associated with projective representations that gives rise to Bochner representations.

\subsection{Fourier transform and Bochner's theorem on finite abelian groups\label{S:Fourier transform}}
First, we need a Fourier transform on $G$, which in turn requires the notion of irreducible unitary representations of $G$. A unitary representation $U:g\mapsto U_{g}$ of $G$ is a homomorphism from $G$ to the unitary group $U(d)$ of all $d \times d$ unitary matrices. A representation is irreducible if and only if it has no proper subrepresentation. As an example, every group has a one-dimensional trivial irreducible representation $\mathds{1}:g \mapsto 1, \forall g$. The theory simplifies tremendously when $G$ is a finite abelian group since there are always $|G|$ irreducible representations, all of which are one-dimensional and can be constructed by sending generators to primitive $|G|$th roots of unity. Associated with any unitary representation of
a group $G$ is the unitary character $\chi=\Tr\left(U\right)$.
The definition of the Fourier transform $\tilde{f}$ of a function $f$ on an abelian group $G$ relies on the notion of irreducible characters $\left\{ \chi_{j}\right\} $, $\chi_{j}=\Tr\left(U^{\left(j\right)}\right)$,
associated with all irreducible representations $\left\{ U^{\left(j\right)}\right\} $
of $G$:
\begin{align*}
\tilde{f}_{j} & =\frac{1}{|G|}\sum_{g}\chi_{j}(g)f_{g}.
\end{align*}
For instance, the "zero frequency" component of the Fourier transform comes from choosing $\chi_j$ to be the irreducible character of the trivial representation.

The irreducible characters form a basis of functions on $G$. The character table $\chi_{j}(g)$ is a complex Hadamard matrix: it
is unitary, because of the orthogonality of irreducible characters, and $\left|\chi_{j}(g)\right|=1$
$\forall j,g$. In particular, there is the inverse Fourier
transform
\begin{align*}
f_{g} & =\sum_{j}\overline{\chi_{j}(g)}\tilde{f}_{j}.
\end{align*}
and the complex conjugation simply sends $\chi_j(g)$ to $\chi_j(g^{-1})$. This simple relation is absent in the Fourier transform on a non-abelian group, which requires replacing the unitary characters with all the entries of $U$. In that case the complex conjugation sends an entry of $U_g$ to the same entry not of $U_{g^{-1}}$ but of its transpose. This precludes a non-abelian analogue of our main theorem, Theorem \ref{main_thm}.

The Bochner's theorem adapts without difficulty to this setting:
\begin{theorem} \label{gen_bochner}
A function $\phi$ is a characteristic function (that is, the Fourier transform of a valid probability mass function, the discrete analog of probability density function) if and only if the following are satisfied:
\begin{enumerate}
\item $\phi\left(0\right)=1,$
\item $\phi$ is positive definite.
\end{enumerate}
\end{theorem}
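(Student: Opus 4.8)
The plan is to prove the two implications by trading the positive definiteness of the characteristic function $\phi$, a function on the dual group $\widehat G$, for the pointwise nonnegativity of its inverse Fourier transform $\mu$, a function on $G$, and reading the normalization off $\phi(0)$. Throughout, ``characteristic function'' will mean $\phi_j=\sum_{g}\chi_j(g)\mu_g$ in the normalization of Definition~\ref{def:char} (so that the factor $1/|G|$ sits in the inverse transform $\mu_g=\frac1{|G|}\sum_j\overline{\chi_j(g)}\phi_j$), and ``positive definite'' will mean Equation~(\ref{pos def function}) with the group operation of $\widehat G$. Since $G$ is finite abelian, its irreducible characters are multiplicative, $\chi_{j+j'}(g)=\chi_j(g)\chi_{j'}(g)$, with $\chi_{-j}(g)=\overline{\chi_j(g)}=\chi_j(g^{-1})$ and trivial character $\chi_0\equiv1$; I will freely use the orthogonality of the characters recalled above (the character table is unitary), in particular $\sum_g\overline{\chi_j(g)}=|G|\,\delta_{j,0}$ and $\sum_j\overline{\chi_j(g)}\chi_j(h)=|G|\,\delta_{g,h}$.

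For the ``only if'' direction, suppose $\phi$ is the Fourier transform of a probability mass function $\mu$. Condition~1 is immediate: $\phi(0)=\sum_g\mu_g=1$. For condition~2, I would substitute $\phi_j=\sum_g\chi_j(g)\mu_g$ into $\sum_{k,k'}\overline{a_k}a_{k'}\,\phi(j_{k'}-j_k)$, factor $\chi_{j_{k'}-j_k}(g)=\chi_{j_{k'}}(g)\overline{\chi_{j_k}(g)}$ by multiplicativity, and interchange the sums to obtain $\sum_g\mu_g\,\bigl|\sum_k a_k\chi_{j_k}(g)\bigr|^2\ge0$, which holds because $\mu_g\ge0$ for every $g$. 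This is the only computation needed and it is routine.

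For the ``if'' direction, assume conditions~1 and~2 and \emph{define} $\mu$ on $G$ by $\mu_g=\frac1{|G|}\sum_j\overline{\chi_j(g)}\phi_j$; by Fourier inversion it then suffices to show $\mu$ is a probability mass function. Applying the positive-definiteness inequality with the index set taken to be all of $\widehat G$, once each, shows that the $|G|\times|G|$ matrix $M_{jj'}:=\phi(j'-j)$ is positive semidefinite; being positive semidefinite it is Hermitian, which forces $\overline{\phi(k)}=\phi(-k)$ and hence that $\mu$ is real-valued. The crucial observation is that $M$, a group matrix indexed by the abelian group $\widehat G$, is simultaneously diagonalized by the mutually orthogonal vectors $v^{(g)}$ with entries $v^{(g)}_j=\chi_j(g)$, $g\in G$: shifting the summation index and using multiplicativity gives $(Mv^{(g)})_j=\chi_j(g)\sum_{j''}\phi_{j''}\chi_{j''}(g)=|G|\,\mu_{g^{-1}}\,v^{(g)}_j$, so the spectrum of $M$ is exactly $\{|G|\,\mu_g\}_{g\in G}$ (using that $g\mapsto g^{-1}$ is a bijection of $G$). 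Hence $M\ge0$ if and only if $\mu_g\ge0$ for all $g$. Finally, summing the definition of $\mu$ over $g$ and using $\sum_g\overline{\chi_j(g)}=|G|\,\delta_{j,0}$ gives $\sum_g\mu_g=\phi_0=1$ by condition~1, so $\mu$ is a probability mass function and $\phi$ is its characteristic function.

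The only substantive step is the diagonalization in the converse --- recognizing that positive definiteness of $\phi$ is exactly positive semidefiniteness of the associated $\widehat G$-circulant $M$, whose spectrum is, up to the factor $|G|$, the inverse Fourier transform $\mu$; everything else is bookkeeping with character orthogonality. I do not expect a real obstacle: the finiteness of $G$ removes the analytic subtleties --- regularity of the measure and the continuity hypothesis --- present in the classical statement, Theorem~\ref{Bochner's theorem}, which is precisely why condition~2 of that theorem has no analogue in Theorem~\ref{gen_bochner}.
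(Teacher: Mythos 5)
Your proposal is correct and follows essentially the same route as the paper: the ``only if'' direction is the identical computation reducing the quadratic form to $\sum_g \mu_g \bigl|\sum_k a_k\chi_{j_k}(g)\bigr|^2$, and your converse --- diagonalizing the $\widehat G$-circulant $M_{jj'}=\phi(j'-j)$ by the character vectors to read off the spectrum $\{|G|\mu_g\}$ --- is the same computation the paper performs by evaluating the form on the single character vector $a_j=\overline{\chi_j(g')}$ that isolates an offending point $g'$ with $\mu_{g'}<0$. Your version is slightly more complete in that it makes explicit (via Hermiticity of $M$) why $\mu$ is real-valued, a point the paper leaves implicit.
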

\begin{proof}
Part 1 is the normalization condition which can be checked by direct calculation. For 2, consider 
\begin{align*}
\sum_{j,j'=1}\overline{a}_{j}a_{j'}\tilde{f}_{j'-j} & =\frac{1}{|G|}\sum_{j,j'=1}\overline{a}_{j}a_{j'}\sum_{g}\chi_{j'-j}(g)f_{g}\\
 & =\frac{1}{|G|}\sum_{j,j'=1}\overline{a}_{j}a_{j'}\sum_{g}\chi_{j'}(g)\overline{\chi}_{j}(g)f_{g}\\
 & =\frac{1}{|G|}\sum_{g}f_{g}\left|\sum_{j=1}a_{j}\chi_{j}(g)\right|^{2}.
\end{align*}
If $f_{g}\ge0$ $\forall g$, then the LHS is also positive. Conversely,
if $f_{g}<0$ at $g'$, then we can choose $a_{j}$ so that $\sum_{j=1}a_{j}\chi_{j'}(g)$
vanishes everywhere except at $g=g'$, which is possible because $\left\{ \chi_{j}\right\} $
form a basis of functions on $G$.
\end{proof}

This theorem will give us the analog of the 0-positivity condition for the quasi-probability distributions over $G$.  Next, we will need to define the analog of the 1-positivity condition.  In anticipation of this, we first require some additional structure on the frame elements.  In particular, they must arise from a unitary projective representation.

\subsection{Unitary projective representations\label{S:Projective representation}}

Instead of mapping to the unitary group $U(d)$, a unitary projective representation $\Pi$ is a homomorphism from $G$ to the projective unitary group $PU(d)$. In other words,
$\left\{ \Pi_g\right\} $ is a group with multiplication up to a
\emph{2-cocycle}, $\alpha\left(g,g'\right):G\times G\to\mathbb{C}$,
\begin{align*}
\Pi_{g}\Pi_{g'} & =\alpha\left(g,g'\right)\Pi_{gg'}.
\end{align*}
As an example, suppose that $\{\Pi_g\}$ is the generalized Pauli matrices in $d$ dimensions $\{ X^j Z^l \omega^{jl/2} \}$, one can check by direct computation that it is a unitary projective representation of $\mathbb{Z}_d \times \mathbb{Z}_d$ with $\alpha\left(g,g'\right)=\omega^{(j' l - jl')/2}$.

Two projective representations $\Pi$ and $\Pi'$ are projectively
equivalent if and only if $\Pi_g$ and $\Pi'_g$ are similar matrices
up to an arbitrary complex number $\mu(g)$ with $\mu(e)=1$:
\begin{align*}
\Pi_g & =\mu\left(g\right)U^{-1}\Pi'_gU,\; \forall g.
\end{align*}
This translates into the condition on the 2-cocycles
\begin{align}
\alpha\left(g,g'\right) & =\frac{\mu(g)\mu(g')}{\mu\left(gg'\right)}\alpha'(g,g').\label{eq:cohomologous}
\end{align}
2-cocycles that give rise to equivalent projective representations
are \emph{cohomologous} and belong to the same \emph{2-cohomology
class} $\tilde{\alpha}$. All projective representations of $G$ can
be classified according to their 2-cohomology class $\tilde{\alpha}$.
A canonical reference on this subject is the multi-volume work by Karpilovsky \cite[volumes 2 and 3]{Karpilovsky}.

Within each 2-cohomology class, we will choose to work with a phase,
$\left|\alpha\right|=1$, which can always be done. We will also choose
$\alpha\left(g,g^{-1}\right)=\alpha\left(g^{-1},g\right)=1$ $\forall g$
so that $\Pi_{g}^{-1}=\Pi_{g^{-1}}$ $\forall g$ as is the case, again, for the generalized Pauli matrices. The first equality
always holds because $\Pi_g$ and $\Pi_{g^{-1}}$ commute.
For the second equality, we need to find $\mu$ such that, using (\ref{eq:cohomologous})
with $\alpha'\left(g,g'\right)=1$, we have
\begin{align*}
\alpha\left(g,g^{-1}\right) & =\mu(g)\mu\left(g^{-1}\right).
\end{align*}
It is clear that $\mu(g)$ and $\mu\left(g^{-1}\right)$ can
be chosen independently of the values of $\mu$ at any other group
element. Therefore, with this choice of $\mu$, $\Pi_{g}^{-1}=\Pi_{g^{-1}}$
$\forall g$.

Additionally, to handle cases corresponding to overcomplete frames, we will need to know about the kernel. The kernel of a projective representation is a set of group elements
that are mapped to the identity operator up to a phase. A projective
representation is \emph{faithful} if and only if ker $\Pi=\left\{ e\right\} $,
the identity element. Since ker $\Pi$ is a normal subgroup of $G$,
any $\Pi$ descends to a faithful projective representation $\pi$
of $G/\ker\Pi$ defined by $\pi\left(g\ker\Pi\right)=\Pi_g$
up to a phase and vice versa. So there is a one-one correspondence
between projective representations of $G$ and of $G/\ker\Pi$, which
preserves irreducibility and frameness.

\subsection{Projective Fourier frames and Bochner representations of states}

With the above preparations, we are now ready to discuss conditions on the frame elements which define a quasi-probability representation.
We allow the frame to inherit the Fourier transform from its underlying group:
\begin{definition}Let $\{F_j\}$ be a frame and $G$ be a finite abelian group.  Then $\{\tilde F_g\}_{g\in G}$ such that
\begin{equation}\label{fourier frame}
F_j =\frac1{|G|}\sum_g \chi_j(g) \tilde F_g,
\end{equation}
is called the Fourier frame of $\{F_j\}$ and $\phi_{\rho}(g) = \Tr(\rho \tilde F_g)$ is a characteristic function
of $\rho$.
\end{definition}

The generalization of the classical Bochner's theorem (Theorem \ref{gen_bochner}) immediately applies:
\begin{theorem}\label{quant_class_boch}
A characteristic function $\phi_{\rho}$ of $\rho$
is the Fourier transform of a positive quasi-probability representation
of $\rho$ if and only if the following are satisfied:
\begin{enumerate}
\item $\phi_{\rho}\left(0\right)=1,$
\item $\phi_{\rho}$ is positive definite.
\end{enumerate}
\end{theorem}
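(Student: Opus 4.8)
The plan is to reduce Theorem \ref{quant_class_boch} to the classical Bochner's theorem on $G$ (Theorem \ref{gen_bochner}) by identifying the correct dictionary between the two statements. The key observation is that a positive quasi-probability representation of $\rho$ is, by definition, a probability mass function on $G$: the map $\rho \mapsto \mu_\rho(g)$ with $\mu_\rho(g) = \Tr(\rho F_g)$ (where here the frame is indexed by $G$ itself, the common index set for the frame and its Fourier transform) gives a real, normalized distribution, and "positive" simply means $\mu_\rho(g) \ge 0$ for all $g$. By Equation (\ref{fourier frame}) and the definition $\phi_\rho(g) = \Tr(\rho \tilde F_g)$, the characteristic function $\phi_\rho$ is exactly the Fourier transform of $\mu_\rho$ in the sense of Section \ref{S:Fourier transform}.

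With this identification in hand, the argument is essentially a restatement. First I would note that $\phi_\rho$ is a function on $G$ (after relabeling indices $j \leftrightarrow g$, both being elements of the same abelian group of order $|G|$, since the irreducible characters are indexed by $\hat G \cong G$), and that the inverse Fourier transform of $\phi_\rho$ recovers $\mu_\rho$. Then: $\mu_\rho$ is a genuine probability mass function if and only if, by Theorem \ref{gen_bochner}, its Fourier transform $\phi_\rho$ satisfies $\phi_\rho(0) = 1$ and $\phi_\rho$ is positive definite in the sense of Equation (\ref{pos def function}) adapted to $G$, i.e. $\sum_{j,j'} \conj a_j a_{j'} \phi_\rho(j' - j) \ge 0$ for all choices of complex coefficients. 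The forward direction (a positive quasi-probability representation yields these two conditions) and the converse (the two conditions force $\mu_\rho$ to be a nonnegative normalized function, hence a positive quasi-probability representation) both come directly from Theorem \ref{gen_bochner}.

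The one point that needs a sentence of care — and the only place where there is any real content beyond bookkeeping — is the normalization: I must confirm that $\phi_\rho(0) = 1$ corresponds to $\sum_g \mu_\rho(g) = 1$, which holds because evaluating the Fourier transform at the trivial character gives $\phi_\rho(0) = \sum_g \mu_\rho(g)$, and the frame condition guarantees $\sum_g \mu_\rho(g) = \sum_g \Tr(\rho F_g)$ reproduces the trace-one normalization built into any quasi-probability representation (equivalently, $\tilde F_0 = \sum_j F_j$ is the resolution of identity up to normalization). Since this is the part where the frame structure rather than pure Fourier analysis enters, I expect it to be the main (and only mild) obstacle; everything else is a transcription of Theorem \ref{gen_bochner}. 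Concretely, the proof will read: "Apply Theorem \ref{gen_bochner} with $f_g = \mu_\rho(g)$ and $\tilde f_j = \phi_\rho(j)$; the equivalence is immediate, and condition 1 is the normalization $\sum_g \mu_\rho(g) = 1$ which always holds for a quasi-probability representation."
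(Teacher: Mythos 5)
Your proposal is correct and matches the paper's (essentially unstated) argument: the paper simply asserts that Theorem \ref{gen_bochner} "immediately applies" once $\phi_\rho$ is recognized as the Fourier transform of the quasi-probability $\mu_\rho(j)=\Tr(\rho F_j)$, which is exactly the dictionary you set up. Your extra care about the normalization $\phi_\rho(0)=\sum_g\mu_\rho(g)=1$ and the harmless swap of which function is "the transform" (resolved by $\hat G\cong G$ and the symmetry of the quadratic-form identity in the proof of Theorem \ref{gen_bochner}) only makes the reduction more explicit than the paper's.
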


Next we define a generalization of the 1-positivity condition such that, together with Theorem \ref{quant_class_boch}, we have a full generalization of quantum Bochner's theorem.
\begin{definition}
Let $G$ be a finite group of size $N$ and $g,g' \in G$. Given $\alpha:G\times G\to\mathbb{C}$,
a function $\phi$ on $G$ is $\alpha$-positive definite
if and only if the matrix defined by
\begin{align*}
M^Q_{gg'} & =\phi \left( g'g^{-1}\right) \alpha \left( g',g^{-1}\right)
\end{align*}
is positive semidefinite.
\end{definition}

Note that we could do away with associativity in the definition (so that $G$ is not a group), but we will not consider $G$ that is not a group in the rest of the paper. So this definition will suffice for our application.

Now recall we would like to find those quasi-probability representations which give rise to a theorem analogous to quantum Bochner's theorem for the Wigner function (Theorem \ref{wigner_bochner}).  We will call such a quasi-probability representation a \emph{Bochner representation} and define it formally as follows:
\begin{definition} Let $\phi_{\rho}$ be a characteristic function of $\rho$.
A \emph{Bochner representation} is a quasi-probability representation in which for some $\alpha$
\begin{enumerate}
\item $\rho$ is a density operator if and only if $\phi_{\rho}$ is $\alpha$-positive definite.
\item $\rho$ is a density operator with positive quasi-probability representation if and only
if $\phi_{\rho}$ is simultaneously $\alpha$-positive definite and positive definite.
\end{enumerate}
\end{definition}

Not all quasi-probability representations will be Bochner representations.  Possessing the niceties of a quantum Bochner's theorem requires a lot of structure.  With hindsight, we have primed ourselves with the mathematical tools necessary to identify this structure.  The following defines those Fourier frames which are also projective representations:

\begin{definition} A {\em projective frame} $\tilde{\Pi}$ is a frame
which is also the image of a projective representation of an abelian
group $G$ with $\tilde{\Pi}_{g}^{-1}=\tilde{\Pi}_{g^{-1}}$.
\end{definition}

Now we state the main theorem which characterizes the quasi-probability representations which possess a quantum Bochner's theorem---that is, a Bochner representation.
\begin{theorem}\label{main_thm}
Any projective frame is a Fourier frame of a Bochner representation, with $\alpha\left(g^{-1},g'\right)$ being the 2-cocycle of the projective frame.\end{theorem}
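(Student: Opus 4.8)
The plan is to verify the two clauses in the definition of a Bochner representation directly, reducing the second clause to the first by invoking Theorem~\ref{quant_class_boch}. Write the projective frame as $\{\tilde\Pi_g\}_{g\in G}$ with $\tilde\Pi_g\tilde\Pi_{g'}=\beta(g,g')\tilde\Pi_{gg'}$, where $\beta$ is its (normalized) $2$-cocycle, so that $|\beta|=1$, $\beta(g,g^{-1})=\beta(g^{-1},g)=1$, and $\tilde\Pi_g^{-1}=\tilde\Pi_g^\dagger=\tilde\Pi_{g^{-1}}$; in particular $\tilde\Pi_e=\id$. First I would record the bookkeeping that guarantees this data is genuinely a quasi-probability representation: the Fourier-inverse operators $F_j=\frac1{|G|}\sum_g\chi_j(g)\tilde\Pi_g$ are Hermitian, because $\conj{\chi_j(g)}=\chi_j(g^{-1})$ on an abelian group and $\tilde\Pi_g^\dagger=\tilde\Pi_{g^{-1}}$; column orthogonality of the character table gives $\sum_j\chi_j(g)=|G|\,\delta_{g,e}$, hence $\sum_jF_j=\tilde\Pi_e=\id$ and $\sum_j\mu_\rho(j)=\Tr\rho=1$; and since $\{F_j\}$ is a frame for $\herm$, the map $\rho\mapsto\mu_\rho(j)=\Tr(\rho F_j)$ is affine and invertible, with characteristic function $\phi_\rho(g)=\Tr(\rho\tilde\Pi_g)$ and $\phi_\rho(0)=\Tr\rho=1$. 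Because $\{F_j\}$ and $\{\tilde\Pi_g\}$ are related by an invertible linear map and the real span of $\{F_j\}$ is $\herm$, the complex span of $\{\tilde\Pi_g\}$ is the whole operator space on $\mathcal H$.

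The heart of the proof is clause~1: $\rho$ is a density operator if and only if $\phi_\rho$ is $\alpha$-positive definite, i.e. $M^Q\ge0$. Given any operator $A$, I would expand $A=\sum_g a_g\tilde\Pi_g$ (possible because $\{\tilde\Pi_g\}$ spans; if the frame is overcomplete the coefficients need not be unique, but any choice works), and then use $\tilde\Pi_g^\dagger=\tilde\Pi_{g^{-1}}$ together with the cocycle relation to obtain
\begin{align*}
\Tr(\rho A^\dagger A)
&=\sum_{g,g'}\conj{a_g}a_{g'}\,\Tr\!\big(\rho\,\tilde\Pi_{g^{-1}}\tilde\Pi_{g'}\big)\\
&=\sum_{g,g'}\conj{a_g}a_{g'}\,\beta(g^{-1},g')\,\phi_\rho(g^{-1}g')
=\sum_{g,g'}\conj{a_g}a_{g'}\,M^Q_{gg'},
\end{align*}
where, because $G$ is abelian, $g^{-1}g'=g'g^{-1}$, and $M^Q_{gg'}=\beta(g^{-1},g')\,\phi_\rho(g'g^{-1})$ is exactly the matrix appearing in the definition of $\alpha$-positive definiteness once $\alpha$ is identified with the $2$-cocycle of the projective frame as in the statement of Theorem~\ref{main_thm}. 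From here: if $\rho\ge0$ then $\sum_{g,g'}\conj{a_g}a_{g'}M^Q_{gg'}=\Tr(\rho A^\dagger A)\ge0$ for every coefficient vector, so $M^Q\ge0$; conversely, if $M^Q\ge0$ then $\Tr(\rho B)=\Tr(\rho A^\dagger A)\ge0$ for every positive operator $B=A^\dagger A$, so $\rho\ge0$, with Hermiticity of $M^Q$ moreover forcing $\rho$ itself to be Hermitian since $\{\tilde\Pi_g\}$ spans, and with normalization carried by $\tilde\Pi_e=\id$ exactly as in Theorems~\ref{wigner_bochner} and~\ref{gross_bochner}.

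Clause~2 should then follow immediately: if $\rho$ is a density operator with a positive quasi-probability representation then $\mu_\rho$ is a probability mass function, so Theorem~\ref{gen_bochner} makes $\phi_\rho$ positive definite while clause~1 makes it $\alpha$-positive definite; conversely, $\alpha$-positive definiteness gives, by clause~1, that $\rho$ is a density operator, positive definiteness gives, by Theorem~\ref{quant_class_boch}, that $\phi_\rho$ is the Fourier transform of a positive quasi-probability representation of $\rho$, and this can only be $\mu_\rho$, the unique representation attached to our fixed frame. Faithfulness of the underlying projective representation never enters the argument---the kernel discussion of Section~\ref{S:Projective representation} is only needed to enumerate and characterize such frames---so the overcomplete case requires nothing beyond allowing non-unique expansion coefficients above. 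I expect the only real friction to be this last point together with pinning down the argument order of $\beta$ against the convention used in the definition of $M^Q$ (though this ambiguity only changes $M^Q$ by a transposition, which is immaterial for positive semidefiniteness); the conceptual content is carried entirely by the frame formalism and by Theorem~\ref{quant_class_boch}.
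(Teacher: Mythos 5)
Your proposal is correct and follows essentially the same route as the paper's proof: expand $A$ in the projective frame, use $\tilde\Pi_g^\dagger=\tilde\Pi_{g^{-1}}$ and the cocycle relation to identify $\Tr(\rho A^\dagger A)$ with the quadratic form of $M^Q$, and then invoke Theorem~\ref{quant_class_boch} for the positivity clause. You simply spell out more of the bookkeeping (Hermiticity and normalization of the $F_j$, the converse direction of clause~1, and the transposition-invariance of positive semidefiniteness under the cocycle argument-order convention) than the paper does.
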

\begin{proof}
The Fourier transform of $\tilde{\Pi}$ is a set of Hermitian operators
\begin{align*}
F_{j}^{\dagger} & =\frac{1}{d^{2}}\sum_{g}\overline{\chi_{j}(g)}\tilde{\Pi}_{g}^{\dagger}=\frac{1}{d^{2}}\sum_{g}\chi_{j}\left(g^{-1}\right)\tilde{\Pi}_{g^{-1}}=F_{j}.
\end{align*}
Note that this is not the case if $G$ is non-abelian as discussed in section \ref{S:Fourier transform}. Now, since the Fourier transform is invertible, $F$ is always a frame
whenever $\tilde{\Pi}$ is one, so a projective frame is a Fourier
frame of a quasi-probability representation.

That the representation is a Bochner representation follows from the
ability to expand an arbitrary operator $A$ using the projective
frame:
\begin{align*}
\Tr\left(\rho A^{\dagger}A\right) & =\sum_{g,g'}\overline{a}_{g}a_{g'}\Tr\left(\rho\tilde{\Pi}_{g}^{\dagger}\tilde{\Pi}_{g'}\right)\\
 & =\sum_{g,g'}\overline{a}_{g}a_{g'}\Tr\left(\rho\tilde{\Pi}_{g^{-1}}\tilde{\Pi}_{g'}\right)\\
 & =\sum_{g,g'}\overline{a}_{g}a_{g'}\Tr\left(\rho\tilde{\Pi}_{g'g^{-1}}\right)\alpha\left(g^{-1},g'\right)\\
 & =\sum_{g,g'}\overline{a}_{g}a_{g'}\phi_{\rho}\left(g'g^{-1}\right)\alpha\left(g^{-1},g'\right)\ge0\
\end{align*}
if and only if $\rho$ is positive semidefinite.

The generalized Bochner's Theorem \ref{quant_class_boch} then completes the proof that
$F$ defines a Bochner representation.
\end{proof}

Since the generalized Pauli matrices employed to define the discrete Wigner function of Gross form a projective frame. Theorem \ref{main_thm}, therefore, includes theorem \ref{gross_bochner} as a special case.

\subsection{Characterizing and constructing projective Fourier frames}

Now that we have imposed additional structural requirements on the frame, we are left with the question of existence of the projective frame other than the one generated by the generalized Pauli matrices. As we will see, Fourier frames are in fact quite ubiquitous. They can be divided into projective frames that are also faithful projective representations and those that are not. The latter class can provide overcomplete frames for constructing quasi-probability representations, but not always.
\begin{definition}
A faithful projective frame is a frame which is faithful as a projective representation
. Otherwise it is an unfaithful projective frame.
\end{definition}

\begin{theorem}
  
\begin{enumerate}
\item A faithful projective frame $\tilde{\Pi}$ exists if and only if $G$
is a symmetric product of groups $H\times H=:H^{2}$ with $\left|H\right|=d$.
\item Both $\tilde{\Pi}$ and its the Fourier frame are orthogonal frames---that is, orthogonal bases.
\end{enumerate}
\end{theorem}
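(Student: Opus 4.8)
The plan is to package a projective frame's data into its \emph{commutator pairing} and then reduce the question to the structure theory of finite symplectic abelian groups. Since $G$ is abelian, $\tilde\Pi_g\tilde\Pi_{g'}=\beta(g,g')\tilde\Pi_{g'}\tilde\Pi_g$ for the alternating bicharacter $\beta(g,g'):=\alpha(g,g')/\alpha(g',g)$, which is independent of the chosen representative cocycle (hence of the normalization $\tilde\Pi_g^{-1}=\tilde\Pi_{g^{-1}}$). The first step is to prove $\ker\tilde\Pi=\mathrm{rad}(\beta):=\{g:\beta(g,\cdot)\equiv 1\}$. The inclusion $\subseteq$ is immediate since a scalar $\tilde\Pi_g$ commutes with everything. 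For $\supseteq$, observe that the Fourier frame $\{F_j\}$ spans $\herm$ over $\mathbb R$, hence spans $M_d(\mathbb C)$ over $\mathbb C$; applying the invertible complex-linear Fourier transform (\ref{fourier frame}), the $\tilde\Pi_g$ also span $M_d(\mathbb C)$, so a $\tilde\Pi_g$ commuting with all $\tilde\Pi_{g'}$ must be central, i.e.\ scalar. Thus $\tilde\Pi$ is faithful if and only if $\beta$ is nondegenerate.

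Assume now $\beta$ is nondegenerate. For each $g\neq e$ some $g'$ has $\beta(g',g)\neq 1$, and conjugating $\tilde\Pi_g$ by $\tilde\Pi_{g'}$ multiplies it by $\beta(g',g)$, forcing $\Tr\tilde\Pi_g=0$. Hence $\Tr(\tilde\Pi_g^\dagger\tilde\Pi_{g'})=\alpha(g^{-1},g')\Tr\tilde\Pi_{g^{-1}g'}$ is $0$ for $g\neq g'$ and $d$ for $g=g'$, so $\{\tilde\Pi_g\}_{g\in G}$ is an orthogonal set of $|G|$ nonzero operators inside the $d^2$-dimensional algebra $M_d(\mathbb C)$; being also a spanning set (frameness), it is an orthogonal basis and $|G|=d^2$. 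The Fourier frame is orthogonal as well --- from (\ref{fourier frame}) together with orthogonality of the $\tilde\Pi_g$ and of the irreducible characters one gets $\Tr(F_j^\dagger F_{j'})=\tfrac1d\delta_{jj'}$ --- which proves part~(2). To finish the forward direction of part~(1) I would invoke the classification of finite abelian groups carrying a nondegenerate alternating bicharacter: every such group is isomorphic to $H\times H$ (pick a maximal $\beta$-isotropic subgroup $L$; then $\beta$ induces a perfect pairing $G/L\times L\to\mathbb C^\times$, so $G/L\cong\widehat L\cong L$, and the extension splits). Together with $|G|=d^2$ this gives $|H|=d$.

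For the converse, given $G\cong H\times H$ with $|H|=d$, I would exhibit the Weyl--Heisenberg frame. Put $\mathcal H=\mathbb C[H]$ with basis $\{\ket x\}_{x\in H}$, fix an isomorphism $H\cong\widehat H$, $b\mapsto\chi_b$, and set $\tilde\Pi_{(a,b)}=\omega(a,b)\,T_aM_b$ with $T_a\ket x=\ket{x+a}$ and $M_b\ket x=\chi_b(x)\ket x$, choosing the unimodular phases $\omega(a,b)$ as in Section~\ref{S:Projective representation} so that $\tilde\Pi_{(a,b)}^{-1}=\tilde\Pi_{-(a,b)}$. This is a projective representation of $H\times H$ whose commutator pairing $\beta\big((a,b),(a',b')\big)=\chi_b(a')\overline{\chi_{b'}(a)}$ is nondegenerate because $\widehat H$ separates points of $H$; by the computation above the $d^2$ operators $\tilde\Pi_{(a,b)}$ are then pairwise orthogonal, hence an orthogonal basis --- in particular a frame --- so $\tilde\Pi$ is a projective frame, and it is faithful since $T_aM_b\in\mathbb C\id$ forces $a=0$ (otherwise $T_a$ is a nontrivial permutation) and then $b=0$.

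The step I expect to be the main obstacle is the structural classification used in the forward direction --- that a finite abelian group with a nondegenerate alternating bicharacter must split as $H\times H$. The subtlety is entirely at the prime $2$, where ``alternating'' ($\beta(g,g)=1$) is strictly stronger than ``antisymmetric''. I would handle it by reducing to $p$-primary components (mutually $\beta$-orthogonal because their orders are coprime) and showing that in the invariant-factor decomposition of each component every cyclic summand occurs with even multiplicity --- equivalently that a complement to a given maximal isotropic subgroup can always be chosen to be Lagrangian --- or else by citing the corresponding statement in Karpilovsky's treatment of projective representations. Everything else is bookkeeping with the Fourier transform and the trace form already assembled earlier.
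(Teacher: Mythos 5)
Your proposal is correct, and for part (2) it is essentially the paper's own argument: you both show $\Tr\tilde\Pi_g=0$ for $g\neq e$ by exploiting non-commutativity (the paper writes $\tilde\Pi_g$ as a scalar multiple of a product of two non-commuting operators and traces the commutator; you conjugate by $\tilde\Pi_{g'}$ to pick up the factor $\beta(g',g)\neq 1$ --- the same computation), then deduce orthogonality, $|G|=d^2$, and orthogonality of the $F_j$ from character orthogonality. For part (1) your route genuinely differs. The paper observes only that frameness forces irreducibility and then cites Karpilovsky (Vol.~3, Thm.~8.2.18) for the full biconditional ``$G$ has a faithful irreducible projective representation iff $G\cong H\times H$.'' You instead isolate the commutator bicharacter $\beta$, prove the clean identity $\ker\tilde\Pi=\mathrm{rad}(\beta)$ (a sharpening of the paper's irreducibility remark, and your spanning argument for it is sound), reduce faithfulness to nondegeneracy of $\beta$, and then split the biconditional into (a) the structure theorem for finite abelian groups carrying a nondegenerate alternating bicharacter and (b) an explicit Weyl--Heisenberg construction on $\mathbb C[H]$, which you verify correctly (including the commutator pairing $\chi_b(a')\overline{\chi_{b'}(a)}$ and faithfulness). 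This buys a constructive, self-contained converse and makes the symplectic structure visible, at the cost of needing lemma (a), whose only delicate point --- that a Lagrangian subgroup admits a Lagrangian complement, with the subtlety concentrated at $p=2$ --- you correctly flag; that splitting lemma is precisely the nontrivial content hiding inside the Karpilovsky citation, so you have localized rather than eliminated the external input, and falling back on the same reference there (as you propose) is no weaker than what the paper does. One cosmetic caution: your orthogonality computation $\Tr(F_j^\dagger F_{j'})=\tfrac1d\delta_{jj'}$ depends on the $1/|G|$ normalization in the Fourier transform, but the conclusion (orthogonal basis) is normalization-independent, so nothing is at stake.
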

\begin{proof}
1. Being a frame forces a projective frame to be irreducible as a
projective representation.
Then it is known that $G=H^{2}$ with $\left|H\right|=d$ if and only
if it has a faithful irreducible projective representation \cite[Volume 3, Theorem 8.2.18]{Karpilovsky}. 
It is worth noting that this implies faithful projective frames are exactly the \emph{nice error bases} of quantum error correction with abelian index group \cite{knill96}.

2. We first prove that every $\tilde{\Pi}_{g}$, $g\neq e$, is traceless
by writing it as a commutator. Fix $g\neq e$. $\tilde{\Pi}_{g}$
is not proportional to the identity operator by our assumption of
faithfulness, so $g'$ can be found such that $\left[\tilde{\Pi}_{g},\tilde{\Pi}_{g'}\right]\neq0$
because otherwise $\tilde{\Pi}$ is reducible (for $d>1$). By the
group property, we can write $\tilde{\Pi}_{g}$ as a product of two
non-commuting operators
\begin{align*}
\alpha\left(g',g'^{-1}g\right)\tilde{\Pi}_{g} & =\tilde{\Pi}_{g'}\tilde{\Pi}_{g'^{-1}g}.
\end{align*}
But $G$ is abelian, so
\begin{align*}
0\neq\left[\tilde{\Pi}_{g'},\tilde{\Pi}_{g'^{-1}g}\right] & =\left[\alpha\left(g',g'^{-1}g\right)-\alpha\left(g'^{-1}g,g'\right)\right]\tilde{\Pi}_{g}.
\end{align*}
Take the trace of both sides and the claim is proved.

Consequently, they are all orthogonal in the trace inner product.
Therefore, since $\left|\tilde{\Pi}\right|=d^{2}$, it is an orthogonal
basis of $GL\left(\mathcal{H}\right)$, as is $F$, which can be verified
easily.
\end{proof}

The generating matrices of every representative faithful projective frame up
to $d=7$ are listed at \cite{klappi}.
For general $d$, as long as we only consider representations over the complex field, at least one representative projective representation of each and every 2-cohomology classes of $G$ appears as an ordinary representation of a (non-unique) \emph{covering group} of $G$, which can be found, for instance, by the command \texttt{SchurCover(G)} 
in \texttt{GAP} \cite{GAP}. 

By the one-one correspondence between
projective representations of $G$ and of $G/\ker\tilde{\Pi}$, the task of finding an unfaithful projective frame reduces to
the task of lifting a faithful projective frame of an abelian group $H^{2}$
with $\left|H\right|=d$ to the corresponding projective frame of
a group $G$ which has $\ker\tilde{\Pi}$ as a normal subgroup and $G/\ker\tilde{\Pi}=H^{2}$.
Finding such $G$ is an abelian \emph{extension problem} \cite{extension}. Note that $\ker\tilde{\Pi}$ can be any abelian group.

To summarize, the procedure to construct a Bochner representation is as follow:
\begin{enumerate}
\item Pick an abelian group $H$ of size $d$, the dimension of the quantum system.
\item Extend the group $H^2$ by a (possibly trivial) abelian group to $G$.
\item Construct an irreducible projective representation $\{\Pi\}$ of $G$ up to projective equivalence. Within the equivalence class, each unitary operator is still only defined up to a phase. Choose the phases under the constraint $|\alpha|=1$ and $\Pi_{g}^{-1}=\Pi_{g^{-1}}$ $\forall g$. The set of operators is now a projective frame.
\item Fourier transform the projective frame according to equation (\ref{fourier frame}) to obtain the frame $\{F_j\}$ of a Bochner representation.
\end{enumerate}

\subsection{Examples}

First we illustrate the kind of quasi-probability representations that can arise in the characterization of faithful projective frames by examples in $d=2,3,4$. For $d=2$, there is only $G=\mathbb{Z}_{2}^{2}$ and one 2-cohomology
class with the Pauli matrices $\left\{ I,X,Y,Z\right\} $ as a representative
projective frame. The requirement that $\tilde{F}_{g}^{-1}=\tilde{F}_{g^{-1}}$
constrains $\tilde{F}$ to be Hermitian in addition to being unitary,
leaving us with the choices to put $\pm1$ in front of $X,Y,$ or
$Z$. But since $\chi_{j}(g)=\pm1$ also, upon doing the Fourier transform,
we end up with only two quasi-probability representations that are
not related by a unitary transformation depending on whether we change the phases
 of an odd or even number of Pauli matrices. They coincides with the two
similarity classes of the qubit phase space identified in \cite{Gibbons2004}. (See Figure \ref{wootters}.)

\begin{figure}
\centering
\includegraphics{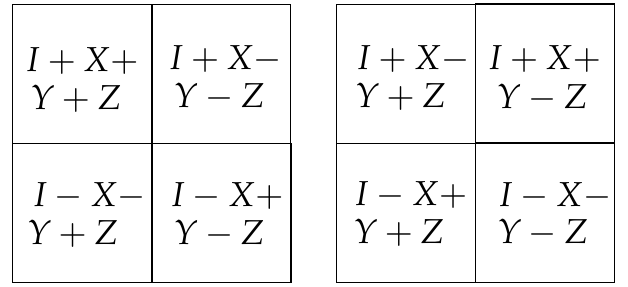}
\caption{The frames for the two similarity classes of qubit phase space in \cite{Gibbons2004}}
\label{wootters}
\end{figure}

For $d=3$, there is only $G=\mathbb{Z}_{3}^{2}$, and two inequivalent
projective representations but with the same image. The phase freedom
that remains after setting $\tilde{F}_{g}^{-1}=\tilde{F}_{g^{-1}}$
is enough to make the set of quasi-probability representations generated
from the two classes identical. The phase freedom, however, supplies a continuum
of choices in choosing different $\tilde{F}$, whose quasi-probability
representations are in general not unitarily related, one of them being the discrete Wigner function of Gross \cite{Gross2006Hudsons}. The case $d=4$ is the first instance with more than one choice of group ($\mathbb{Z}_{4}^{2}$
or $\mathbb{Z}_{2}^{4}$) and inequivalent projective representations
(of $\mathbb{Z}_{2}^{4}$) that generate distinct quasi-probability
representations.

What of unfaithful projective frames? It is well known that transferring Wigner representation
tomography to an even dimensional system requires care.  Wootters first suggested factoring $d$ into its prime components and considering each component as a subsystem.  The discrete phase space is then a multidimensional array, with a dimension for each subsystem \cite{wootters87}.  Alternatively, Leonhardt defined a discrete Wigner function for even $d$ on a single $2d$ by $2d$
phase space \cite{leonhardt}. From the perspective of this work, the reason for doubling the phase space is that the discrete analog of the continuous displacement operators considered there satisfies the relation $\tilde{\Pi}_{g}^{-1}=\tilde{\Pi}_{g^{-1}}$ not in $\mathbb{Z}_d^2$ but in $\mathbb{Z}_{2d}^2$. Leonhardt's phase space is thus an example of a Bochner representation defined by an unfaithful projective frame.

An unfaithful projective frame of size $N$ can give rise to a frame of size less than $N$. As an example, suppose that the quantum system is 2 dimensional and the group is $\mathbb{Z}_2^3$. There is an irreducible projective representation sending the elements (0,0,0),(1,0,0) to $I$, (0,0,1),(1,0,1) to $X$, (0,1,0),(1,1,0) to $Z$, and (0,1,1),(1,1,1) to $Y$. (That is, (1,0,0) is in the kernel of this representation.) But one can find an irreducible character sending the first group element of each pair above to 1 while sending the other element to -1. This component of the Fourier transform is therefore zero, which is not surprising because this projective frame is highly redundant. (All the nonzero components still form a basis.)

\subsection{Non-examples}

There are many discrete analogs of the Wigner function identified in \cite{ferrie_2011} which do not possess the symmetry we have utilized here. For example, Hardy's vector
representation \cite{hardy_2001} and the representation based on symmetric informationally complete positive operator valued measures (SIC POVMs) \cite{zauner,renes} whose frame is a set of rank-one projection operators
that form a basis. There is no projective frame for this representation
since a complete set of projection operators cannot all be pairwise orthogonal.

\section{Conclusion\label{S:end}}

Quantum Bochner's theorem lets us work directly within the phase space formalism of some quasi-probability representation.  This convenience is an alternative to mapping back to the usual complex matrix representation to discuss properties of quantum states.  Working entirely within the phase space representation of Wigner has proven useful for many conceptual and computational tasks in quantum mechanics.
In this paper we have generalized the quantum Bochner's theorem beyond the Wigner function to other discrete phase space distributions.  Specifically, we have found that the quasi-probability representations defined by the Fourier transform of projective representations of finite abelian groups admit a quantum analog of Bochner's theorem.

We hope that this result will prove useful in the quest to identifying classes of quantum states which can be represented positively in some representation \cite{wallman}. With some additional work, we could then, for example, say such states are efficiently simulatable, which can be seen as a generalization of simulation results pertaining only to the Wigner function \cite{Veitch2012Negative, Veitch2013Efficient, Mari2012Positive}.  The goal of this line of reasoning is to have a concrete operational meaning for the oft-quoted sentiment that negativity is an indicator of quantumness.

In future work, we hope to investigate alternative constructions for non-abelian $G$ which preserves the Hermiticity of the frame.  Also, Leonhardt's phase space seems to be the only existing example of an unfaithful projective frame which possesses a quantum Bochner's theorem.  Some further investigation would be required to determine if the endeavor of characterizing all $G$ that give rise to unfaithful projective frames is promising. Finally, investigations have begun in studying negativity in relativistic generalizations of the Wigner function, where it has been found that non-Gaussian states can also possess positive Wigner function \cite{campos}. Perhaps the approach taken here can shed light on the nonexistence of a naive extension of Hudson's theorem to the relativistic setting.

\begin{acknowledgements}
The authors thank Carl Caves and Shashank Pandey for helpful discussions.  This work was supported in part by National Science Foundation Grant No. PHY-1212445.  CF was also supported by the Canadian Government through the NSERC PDF program, the IARPA MQCO program, the ARC via EQuS project number CE11001013, and by the US Army Research Office grant numbers W911NF-14-1-0098 and W911NF-14-1-0103.
\end{acknowledgements}


\begin{thebibliography}{10}

\bibitem{Wigner1932Quantum}
E. Wigner, \emph{On the Quantum Correction For Thermodynamic Equilibrium}, \href{http://dx.doi.org/10.1103PhysRev.40.749}{Physical Review {\bf 40}, 739 (1932)}.

\bibitem{Paz_93}
J.~P. Paz, S. Habib and W.~H. Zurek, \emph{Reduction of the wave packet: Preferred observable and decoherence time scale},\href{http://dx.doi.org/10.1103/PhysRevD.47.488}{Physical Review D {\bf 47}, 488 (1993)}.

\bibitem{habib_98}
S. Habib, K. Shizume and W.~H. Zurek, \emph{Decoherence, Chaos, and the Correspondence Principle}, \href{http://dx.doi.org/10.1103/PhysRevLett.80.4361}{Physical Review Letters {\bf 80}, 4361 (1998)}.

\bibitem{kalev_2009} 
A. Kalev, A. Mann, P.~A. Mello and M. Revzen, \emph{Inadequacy of a classical interpretation of quantum projective measurements via Wigner functions}, \href{http://dx.doi.org/10.1103/PhysRevA.79.014104}{Physical Review A {\bf 79}, 014104 (2009)}.


\bibitem{Veitch2012Negative}
V. Veitch, C. Ferrie, D. Gross and J. Emerson, \emph{Negative quasi-probability as a resource for quantum computation}, \href{http://dx.doi.org/10.1088/1367-2630/14/11/113011}{New Journal of Physics {\bf 14}, 113011 (2012)}.

\bibitem{Veitch2013Efficient}
V. Veitch, N. Wiebe, C. Ferrie and J. Emerson, \emph{Efficient simulation scheme for a class of quantum optics experiments with non-negative Wigner representation},
\href{http://dx.doi.org/10.1088/1367-2630/15/1/013037}{New Journal of Physics {\bf 15}, 013037 (2013)}.

\bibitem{Mari2012Positive}
A. Mari and J. Eisert, \emph{Positive Wigner Functions Render Classical Simulation of Quantum Computation Efficient}, 
\href{http://dx.doi.org/10.1103/PhysRevLett.109.230503}{Physical Review Letters {\bf 109}, 230503 (2013)}.

\bibitem{ferrie_2011}
C. Ferrie, {\em Quasi-probability representations of quantum theory with applications to quantum information science}, \href{http://dx.doi.org/10.1088/0034-4885/74/11/116001}{Reports on Progress in Physics {\bf 74}, 116001 (2011)}.

\bibitem{Hudson1974When}
R.~Hudson,
\newblock {\em When is the Wigner quasi-probability density non-negative?},
\href{http://dx.doi.org/10.1016/0034-4877(74)90007-X}{Reports on Mathematical Physics {\bf 6}, 249 (1974)}.


\bibitem{Soto1983When}
F. Soto and P. Claverie,
\newblock {\em When is the wigner function of multidimensional systems nonnegative?}
\newblock \href{http://dx.doi.org/10.1063/1.525607}{Journal of Mathematical Physics {\bf 24}, 97 (1983)}.

\bibitem{Toft2006Hudsons}
Joachim Toft.
\newblock {\em Hudson's Theorem and Rank One Operators in Weyl Calculus},
\href{http://dx.doi.org/10.1007/3-7643-7514-0_11}{Operator Theory: Advances and Applications {\bf 164}, 153 (2006)}


\bibitem{Gross2006Hudsons}
D.~Gross,
\newblock {\em Hudson's theorem for finite-dimensional quantum systems},
\newblock \href{http://dx.doi.org/10.1063/1.2393152}{Journal of Mathematical Physics {\bf 47}, 122107 (2006)}.

\bibitem{Srinivas1975Some}
M.~D. Srinivas and E.~Wolf,
\newblock {\em Some nonclassical features of phase-space representations of quantum
  mechanics},
\newblock \href{http://dx.doi.org/10.1103/PhysRevD.11.1477}{Physical Review D {\bf 11}, 1477 (1975)}.

\bibitem{Brocker1995Mixed}
T.~Br\"{o}cker and R.~F. Werner,
\newblock {\em Mixed states with positive wigner functions},
\newblock \href{http://dx.doi.org/10.1063/1.531326}{Journal of Mathematical Physics {\bf 36}, 62 (1995)}.

\bibitem{Bochner1933Monotone}
S.~Bochner,
\newblock {\em Monotone funktionen, stieltjessche integrate, und harmonischeanalyse},
\newblock \href{http://dx.doi.org/10.1007/BF01452844}{Mathematische Annalen {\bf 108}, 378 (1933)}.



\bibitem{Christensen2003Introduction}
O. Christensen,
\newblock {\em An Introduction to Frames and Riesz Bases}.
\newblock Birkh\"{a}user,  Boston, 2003.



\bibitem{Ferrie2008Frame}
C. Ferrie and J. Emerson,
\newblock {\em Frame representations of quantum mechanics and the necessity of
  negativity in quasi-probability representations},
\newblock \href{http://dx.doi.org/10.1088/1751-8113/41/35/352001}{Journal of Physics A: Mathematical and Theoretical {\bf
  41}, 352001 (2008)}.

\bibitem{Ferrie2009Framed}
C. Ferrie and J. Emerson,
\newblock {\em Framed hilbert space: hanging the quasi-probability pictures of
  quantum theory},
\newblock \href{http://dx.doi.org/10.1088/1367-2630/11/6/063040}{New Journal of Physics {\bf 11}, 063040 (2009)}.





\bibitem{Spekkens2008Negativity} R.~W. Spekkens, \emph{Negativity and Contextuality are Equivalent Notions of Nonclassicality}, \href{http://dx.doi.org/10.1103/PhysRevLett.101.020401}{Physical Review Letters {\bf 101}, 020401  (2008)}.

\bibitem{Ferrie2010Necessity} C. Ferrie, R. Morris and J. Emerson, \emph{Necessity of negativity in quantum theory}, \href{http://dx.doi.org/10.1103/PhysRevA.82.044103}{Physical Review A {\bf 82}, 044103 (2010)}.

\bibitem{Stulpe} W. Stulpe, \emph{From the attempt of certain classical reformulations of quantum mechanics to quasi-probability representations}, \href{http://dx.doi.org/10.1063/1.4861939}{Journal of Mathematical Physics {\bf 55}, 012109 (2014)}.

\bibitem{Busch1993Classical} P. Busch, K.-E. Hellwig and W. Stulpe, \emph{On Classical Representations of
Finite-Dimensional Quantum Mechanics}, \href{http://dx.doi.org/10.1007/BF00673351}{International Journal of Theoretical Physics {\bf 32}, 399 (1993)}.

\bibitem{Folland1994} G. B. Folland, \emph{A Course in Abstract Harmonic Analysis}. CRC Press, 1994.

\bibitem{Heyer1977} H. Heyer, \emph{Probability Measures on Locally Compact Groups}. Springer-Verlag, 1977.

\bibitem{Karpilovsky} G. Karpilovsky, \emph{Group Representations}, 5 volumes. Elsevier Science, Amsterdam, 1992-1996

\bibitem{knill96}
E. Knill, \emph{Group representations, error bases and quantum codes}, \href{http://arxiv.org/abs/quant-ph/9608049}{LANL report LAUR-96-2807}.


\bibitem{klappi}A. Klappenecker and M. Rotteler, \emph{Beyond stabilizer codes I: Nice error bases},\href{http://dx.doi.org/10.1109/TIT.2002.800471}{IEEE Transactions on Information Theory {\bf 48}, 2392 (2002)} and the \href{http://faculty.cs.tamu.edu/klappi/ueb/ueb.html}{accompanying website}.

\bibitem{GAP} The GAP Group, \emph{Groups,  Algorithms  and Programming}, Version 4.7.5, 
 2014, \url{http://www.gap-system.org}.

V. Dabbaghian, \emph{Repsn---a GAP package}, Version 3.0.2, 2011, \url{http://www.sfu.ca/~vdabbagh/gap/repsn.html}

\bibitem{extension} J. J. Rotman, \emph{An Introduction to the Theory of Groups}, 4th ed. Springer-Verlag, 1995.

\bibitem{Gibbons2004} K. S. Gibbons, M. J. Hoffman, and W. K. Wootters, \emph{Discrete phase space based on finite fields}, \href{http://dx.doi.org/10.1103/PhysRevA.70.062101}{Physical Review A {\bf 70}, 062101 (2004)}.

\bibitem{wootters87}
W.~K. Wootters, \emph{A Wigner-function formulation of finite-state quantum mechanics}, \href{http://dx.doi.org/10.1016/0003-4916(87)90176-X}{Annals of Physics {\bf 176}, 1 (1987)}.

\bibitem{leonhardt}
U. Leonhardt, \emph{Quantum-State Tomography and Discrete Wigner Function}, \href{http://dx.doi.org/10.1103/PhysRevLett.74.4101}{Physical Review Letters {\bf 74}, 4101 (1995)}.

\bibitem{hardy_2001} L. Hardy, \emph{Quantum Theory From Five Reasonable Axioms}, \href{http://arxiv.org/abs/quant-ph/0101012}{arXiv:quant-ph/0101012 (2001)}.

\bibitem{zauner} G. Zauner, \emph{Quantum Designs: Foundations of a Noncommutative Design Theory}, \href{http://dx.doi.org/10.1142/S0219749911006776}{International Journal of Quantum Information {\bf 9}, 445 (2011)}.
\bibitem{renes}
J.~M. Renes, R. Blume-Kohout, A. J. Scott and C. M. Caves, \emph{Symmetric informationally complete quantum measurements}, \href{http://dx.doi.org/10.1063/1.1737053}{Journal of Mathematical Physics {\bf 45}, 2171 (2004)}.

\bibitem{wallman}
J.~J. Wallman and S.~D. Bartlett, {\em Non-negative subtheories and quasiprobability representations of qubits}, \href{http://dx.doi.org/10.1103/PhysRevA.85.062121}{Physical Review A {\bf 85}, 062121 (2012)}.

\bibitem{campos}
A.~G. Campos, R. Cabrera, D.~I. Bondar and H.~A. Rabitz, {\em Violation of Hudson's theorem in relativistic quantum mechanics}, \href{http://dx.doi.org/10.1103/PhysRevA.90.034102}{Physical Review A {\bf 90}, 034102 (2014)}.

\end{thebibliography}
\end{document}